\newtheorem{remark}{\bfseries Remark}
\newtheorem{example}{\bfseries Example}
\newtheorem{prop}{\bfseries Proposition}
\newenvironment{varalgorithm}[1]
  {\algorithm}
  {\endalgorithm}
\begin{document}

\title{\LARGE \bf
Distributed Average Consensus under Quantized Communication \\ via Event-Triggered Mass Summation}

\author{Apostolos~I.~Rikos 
and Christoforos~N.~Hadjicostis 
	\thanks{The authors are with the Department of Electrical and Computer Engineering at the University of Cyprus, Nicosia, Cyprus. E-mails:{\tt~\{arikos01,chadjic\}@ucy.ac.cy}.}
}
\maketitle
\thispagestyle{empty}
\pagestyle{empty}

\begin{abstract}

We study distributed average consensus problems in multi-agent systems with directed communication links that are subject to quantized information flow. The goal of distributed average consensus is for the nodes, each associated with some initial value, to obtain the average (or some value close to the average) of these initial values. In this paper, we present and analyze a distributed averaging algorithm which operates exclusively with quantized values (specifically, the information stored, processed and exchanged between neighboring agents is subject to deterministic uniform quantization) and relies on event-driven updates (e.g., to reduce energy consumption, communication bandwidth, network congestion, and/or processor usage). We characterize the properties of the proposed distributed averaging protocol on quantized values and show that its execution, on any time-invariant and strongly connected digraph, will allow all agents to reach, in finite time, a common consensus value represented as the ratio of two integer that is equal to the exact average. We conclude with examples that illustrate the operation, performance, and potential advantages of the proposed algorithm.

\end{abstract}

\begin{keywords}

Quantized average consensus, event-triggered, distributed algorithms, quantization, digraphs, multi-agent systems. 

\end{keywords}

%
%
%
%
\section{INTRODUCTION}\label{intro}

In recent years, there has been a growing interest for control and coordination of networks consisting of multiple agents, like groups of sensors \cite{2005:XiaoBoydLall} or mobile autonomous agents \cite{2004:Murray}. 
A problem of particular interest in distributed control is the \textit{consensus} problem where the objective is to develop distributed algorithms that can be used by a group of agents in order to reach agreement to a common decision. 
The agents start with different initial values/information and are allowed to communicate locally via inter-agent information exchange under some constraints on connectivity. 
Consensus processes play an important role in many problems, such as leader election \cite{1996:Lynch},  motion coordination of multi-vehicle systems \cite{2005:Olshevsky_Tsitsiklis, 2004:Murray}, and clock synchronization \cite{2007:Gamba}.


One special case of the consensus problem is distributed averaging, where each agent (initially endowed with a numerical value) can send/receive information to/from other agents in its neighborhood and update its value iteratively, so that eventually, it is able to compute the average of all initial values. 
Average consensus is an important problem \cite{2018:BOOK_Hadj, 2005:Olshevsky_Tsitsiklis, 2008:Sundaram_Hadjicostis, 2013:Themis_Hadj_Johansson, 2004:XiaoBoyd, 2010:Dimakis_Rabbat, 2011:Morse_Yu, 1984:Tsitsiklis} and has been studied extensively in settings where each agent processes and transmits real-valued states with infinite precision. 

More recently, researchers have also studied the case when network links can only allow messages of limited length to be transmitted between agents (presumably due to constraints on their capacity), effectively extending techniques for average consensus towards the direction of quantized consensus.
Various probabilistic strategies have been proposed, allowing the agents in a network to reach quantized consensus with probability one \cite{2007:Aysal_Rabbat, 2012:Lavaei_Murray, 2007:Basar, 2009:Carli_Zampieri, 2016:Chamie_Basar, 2011:Cai_Ishii}. 
Furthermore, in many types of communication networks it is desirable to update values infrequently to avoid consuming valuable network resources. 
Thus, there is an increasing need for novel event-triggered algorithms for cooperative control, which aim at more efficient usage of network resources \cite{2013:Dimarogonas_Johansson, 2014:nowzari_cortes, 2012:Liu_Chen}.



In this paper, we present a novel distributed average consensus algorithm that combines the both of the features mentioned above. More specifically, the processing, storing, and exchange of information between neighboring agents is ``event-driven'' and  subject to uniform quantization. 
Following \cite{2007:Basar, 2011:Cai_Ishii} we assume that the states are integer-valued (which comprises a class of quantization effects). 
We note that most work dealing with quantization has concentrated on the scenario where the agents have real-valued states but can transmit only quantized values through limited rate channels (see, e.g., \cite{2008:Carli_Zampieri, 2016:Chamie_Basar}). 
By contrast, our assumption is also suited to the case where the states are stored in digital memories of finite capacity (as in \cite{2009:Nedic, 2007:Basar, 2011:Cai_Ishii}) and the control actuation of each node is event-based, 
which enables more efficient use of available resources. 
The main result of this paper shows that the proposed algorithm will allow all agents to reach quantized consensus in finite time by reaching a value represented as the ratio of two integer values that is equal to the average.

\if 0

The remainder of this paper is organized as follows. 
In Section~\ref{notation}, we introduce the notation used throughout the paper, while in Section~\ref{probForm} we formulate the quantized average consensus problem.
In Section~\ref{algorithm}, we present a probabilistic distributed algorithm, which allows the agents to reach consensus to the quantized average of the initial values with probability one; we also analyze the operation and establish the convergence of the proposed algorithm. 
In Section~\ref{DetAlgorithm}, we present a deterministic event-triggered version of the proposed algorithm and show that it reaches consensus to the quantized average of the initial values in finite time; we also provide an upper bound on the required number of steps.  
In Section~\ref{results}, we present simulation results and comparisons. 
We conclude in Section~\ref{future} with a brief summary and remarks about future work.

\fi

%
%
%
%
\section{PRELIMINARIES}\label{notation}

The sets of real, rational, integer and natural numbers are denoted by $ \mathbb{R}, \mathbb{Q}, \mathbb{Z}$ and $\mathbb{N}$, respectively. 
The symbol $\mathbb{Z}_+$ denotes the set of nonnegative integers.

Consider a network of $n$ ($n \geq 2$) agents communicating only with their immediate neighbors. 
The communication topology can be captured by a directed graph (digraph), called \textit{communication digraph}. 
A digraph is defined as $\mathcal{G}_d = (\mathcal{V}, \mathcal{E})$, where $\mathcal{V} =  \{v_1, v_2, \dots, v_n\}$ is the set of nodes and $\mathcal{E} \subseteq \mathcal{V} \times \mathcal{V} - \{ (v_j, v_j) \ | \ v_j \in \mathcal{V} \}$ is the set of edges (self-edges excluded). 
A directed edge from node $v_i$ to node $v_j$ is denoted by $m_{ji} \triangleq (v_j, v_i) \in \mathcal{E}$, and captures the fact that node $v_j$ can receive information from node $v_i$ (but not the other way around). 
We assume that the given digraph $\mathcal{G}_d = (\mathcal{V}, \mathcal{E})$ is \textit{static} (i.e., does not change over time) and \textit{strongly connected} (i.e., for each pair of nodes $v_j, v_i \in \mathcal{V}$, $v_j \neq v_i$, there exists a directed \textit{path} from $v_i$ to $v_j$). 
The subset of nodes that can directly transmit information to node $v_j$ is called the set of in-neighbors of $v_j$ and is represented by $\mathcal{N}_j^- = \{ v_i \in \mathcal{V} \; | \; (v_j,v_i)\in \mathcal{E}\}$, while the subset of nodes that can directly receive information from node $v_j$ is called the set of out-neighbors of $v_j$ and is represented by $\mathcal{N}_j^+ = \{ v_l \in \mathcal{V} \; | \; (v_l,v_j)\in \mathcal{E}\}$. 
The cardinality of $\mathcal{N}_j^-$ is called the \textit{in-degree} of $v_j$ and is denoted by $\mathcal{D}_j^-$ (i.e., $\mathcal{D}_j^- = | \mathcal{N}_j^- |$), while the cardinality of $\mathcal{N}_j^+$ is called the \textit{out-degree} of $v_j$ and is denoted by $\mathcal{D}_j^+$ (i.e., $\mathcal{D}_j^+ = | \mathcal{N}_j^+ |$).

We assume that each node is aware of its out-neighbors and can directly (or indirectly\footnote{Indirect transmission could involve broadcasting a message to all out-neighbors while including in the message header the ID of the out-neighbor it is intended for.}) transmit messages to each out-neighbor; however, it cannot necessarily receive messages from them. 
In the randomized version of the protocol, each node $v_j$ assigns a nonzero \textit{probability} $b_{lj}$ to each of its outgoing edges $m_{lj}$ (including a virtual self-edge), where $v_l \in \mathcal{N}^+_j \cup \{ v_j \}$. 
This probability assignment can be captured by a column stochastic matrix $\mathcal{B} = [b_{lj}]$. 
A very simple choice would be to set  
\begin{align*}
b_{lj} = \left\{ \begin{array}{ll}
         \frac{1}{1 + \mathcal{D}_j^+}, & \mbox{if $v_{l} \in \mathcal{N}_j^+ \cup \{v_j\}$,}\\
         0, & \mbox{otherwise.}\end{array} \right. 
\end{align*}
Each nonzero entry $b_{lj}$ of matrix $\mathcal{B}$ represents the probability of node $v_j$ transmitting towards the out-neighbor $v_l \in \mathcal{N}^+_j$ through the edge $m_{lj}$, or performing no transmission\footnote{From the definition of $\mathcal{B} = [b_{lj}]$ we have that $b_{jj} = \frac{1}{1 + \mathcal{D}_j^+}$, $\forall v_j \in \mathcal{V}$. This represents the probability that node $v_j$ will not perform a transmission to any of its out-neighbors $v_l \in \mathcal{N}^+_j$ (i.e., it will transmit to itself).}.

In the deterministic version of the protocol, each node $v_j$ also assigns a \textit{unique order} in the set $\{0,1,..., \mathcal{D}_j^+ -1\}$ to each of its outgoing edges $m_{lj}$, where $v_l \in \mathcal{N}^+_j$. 
The order of link $(v_l,v_j)$ for node $v_j$ is denoted by $P_{lj}$ (such that $\{P_{lj} \; | \; v_l \in \mathcal{N}^+_j\} = \{0,1,..., \mathcal{D}_j^+ -1\}$). 
This unique predetermined order is used during the execution of the proposed distributed algorithm as a way of allowing node $v_j$ to transmit messages to its out-neighbors in a \textit{round-robin}\footnote{When executing the deterministic protocol, each node $v_j$ transmits to its out-neighbors by following a predetermined order. The next time it needs to transmit to an out-neighbor, it will continue from the outgoing edge it stopped the previous time and cycle through the edges in a round-robin fashion according to the predetermined ordering.} fashion.

%
%
%
%
\section{PROBLEM FORMULATION}\label{probForm}

Consider a strongly connected digraph $\mathcal{G}_d = (\mathcal{V}, \mathcal{E})$, where each node $v_j \in \mathcal{V}$ has an initial (i.e., for $k=0$) quantized value $y_j[0]$ (for simplicity, we take $y_j[0] \in \mathbb{Z}$). 
In this paper, we develop a distributed algorithm that allows nodes (while processing and transmitting \textit{quantized} information via available communication links between nodes) to eventually obtain, after a finite number of steps, a quantized fraction $q^s$ which is equal to the average $q$ of the initial values of the nodes, where
\begin{equation}
q = \frac{\sum_{l=1}^{n}{y_l[0]}}{n} .
\end{equation}

\begin{remark}
Following \cite{2007:Basar, 2011:Cai_Ishii} we assume that the state of each node is integer valued. 
This abstraction subsumes a class of quantization effects (e.g., uniform quantization).
\end{remark}

The quantized average $q^s$ is defined as the ceiling $q^s = \lceil q \rceil$ or the floor $q^s = \lfloor q \rfloor$ of the true average $q$ of the initial values. 
Let $S \triangleq \mathbf{1}^{\rm T} y[0]$, where $\mathbf{1} = [1 \ ... \ 1]^{\rm T}$ is the vector of ones, and let $y[0] = [y_1[0] \ ... \ y_n[0]]^{\rm T}$ be the vector of the quantized initial values. 
We can write $S$ uniquely as $S = nL + R$ where $L$ and $R$ are both integers and $0 \leq R < n$. 
Thus, we have that either $L$ or $L+1$ may be viewed as an integer approximation of the average of the initial values $S/n$ (which may not be integer in general).

The algorithm we will develop will be iterative. 
With respect to quantization of information flow, we have that at time step $k \in \mathbb{Z}_+$ (where $\mathbb{Z}_+$ is the set of nonnegative integers), each node $v_j \in \mathcal{V}$ maintains the state variables $y^s_j, z^s_j, q_j^s$, where $y^s_j \in \mathbb{Z}$, $z^s_j \in \mathbb{N}$ and  $q_j^s$ (where $q_j^s = \frac{y_j^s}{z_j^s}$), and the mass variables $y_j, z_j$ where $y_j \in \mathbb{Z}$ and $z_j \in \mathbb{N}_0$.
The aggregate states are denoted by $y^s[k] = [y^s_1[k] \ ... \ y^s_n[k]]^{\rm T} \in \mathbb{Z}^n$, $z^s[k] = [z^s_1[k] \ ... \ z^s_n[k]]^{\rm T} \in \mathbb{N}^n$, $q^s[k] = [q^s_1[k] \ ... \ q^s_n[k]]^{\rm T} \in \mathbb{Q}^n$ and $y[k] = [y_1[k] \ ... \ y_n[k]]^{\rm T} \in \mathbb{Z}^n$, $z[k] = [z_1[k] \ ... \ z_n[k]]^{\rm T} \in \mathbb{N}^n$ respectively. 

Following the execution of the proposed distributed algorithm, we argue that $\exists \ k_0$ so that for every $k \geq k_0$ we have 
\begin{equation}\label{alpha_z_y}
y^s_j[k] = \frac{\sum_{l=1}^{n}{y_l[0]}}{\alpha}  \ \ \text{and} \ \ z^s_j[k] = \frac{n}{\alpha} ,
\end{equation}
where $\alpha \in \mathbb{N}$. This means that 
\begin{equation}\label{alpha_q}
q^s_j[k] = \frac{(\sum_{l=1}^{n}{y_l[0]}) / \alpha}{n / \alpha} = q ,
\end{equation}
for every $v_j \in \mathcal{V}$ (i.e., for $k \geq k_0$ every node $v_j$ has calculated $q$ as the ratio of two integer values).


%
%
%
%
\section{RANDOMIZED QUANTIZED AVERAGING ALGORITHM}\label{algorithm}

In this section we propose a distributed information exchange process in which the nodes transmit and receive quantized messages so that they reach quantized average consensus on their initial values after a finite number of steps.

\noindent
The operation of the proposed distributed algorithm is summarized below.

\noindent
\textbf{Initialization:}
Each node $v_j$ selects a set of probabilities $\{ b_{lj} \ | \ v_{l} \in \mathcal{N}_j^+ \cup \{v_j\} \}$ such that $0 < b_{lj} < 1$ and $\sum_{v_{l} \in \mathcal{N}_j^+ \cup \{v_j\}} b_{lj} = 1$ (see Section~\ref{notation}). 
Each value $b_{lj}$, represents the probability for node $v_j$ to transmit towards out-neighbor $v_l \in \mathcal{N}^+_j$ (or perform no transmission), at any given time step (independently between time steps).  
Each node has some initial value $y_j[0]$, and also sets its state variables, for time step $k=0$, as $z_j[0] = 1$, $z^s_j[0] = 1$ and $y^s_j[0] = y_j[0]$, which means that $q^s_j[0] = y_j[0] / 1$. 

\noindent
The iteration involves the following steps:

\noindent
\textbf{Step 1. Transmitting:} According to the nonzero probabilities $b_{lj}$, assigned by node $v_j$ during the initialization step, it either transmits $z_j[k]$ and $y_j[k]$ towards out-neighbor $v_l \in \mathcal{N}_j^+$ or performs no transmission. 
If it performs a transmission towards an out-neighbor $v_l \in \mathcal{N}_j^+$, it sets $y_j[k] = 0$ and $z_j[k] = 0$.

\noindent
\textbf{Step 2. Receiving:} Each node $v_j$ receives messages $y_i[k]$ and $z_i[k]$ from its in-neighbors $v_i \in \mathcal{N}_j^-$, and it sums them along with its stored messages $y_j[k]$ and $z_j[k]$ as
$$
y_j[k+1] = \sum_{v_i \in \mathcal{N}_j^- \cup \{v_j\}} w_{ji}[k] y_i[k] ,
$$
and 
$$
z_j[k+1] = \sum_{v_i \in \mathcal{N}_j^- \cup \{v_j\}} w_{ji}[k] z_i[k] ,
$$
where $w_{ji}[k] = 0$ if no message is received from in-neighbor $v_i \in \mathcal{N}_j^-$; otherwise $w_{ji}[k] = 1$.

\noindent
\textbf{Step 3. Processing:} 
If $z_j[k+1] \geq z^s_j[k]$, node $v_j$ sets $z^s_j[k+1] = z_j[k+1]$, $y^s_j[k+1] = y_j[k+1]$ and 
$$
q^s_j[k+1] = \frac{y^s_j[k+1]}{z^s_j[k+1]} .
$$
Then, $k$ is set to $k+1$ and the iteration repeats (it goes back to Step~1).


The probabilistic quantized mass transfer process is detailed as Algorithm~\ref{algorithm_prob} below (for the case when $b_{lj} = 1/(1+\mathcal{D}_j^+)$ for $v_l \in \mathcal{N}_j^+ \cup \{ v_j \}$ and $b_{lj}=0$ otherwise). 

\noindent
\vspace{-0.5cm}    
\begin{varalgorithm}{1}
\caption{Probabilistic Quantized Average Consensus}
\textbf{Input} 
\\ 1) A strongly connected digraph $\mathcal{G}_d = (\mathcal{V}, \mathcal{E})$ with $n=|\mathcal{V}|$ nodes and $m=|\mathcal{E}|$ edges. 
\\ 2) For every $v_j$ we have $y_j[0] \in \mathbb{Z}$. 
\\
\textbf{Initialization} 
\\ Every node $v_j \in \mathcal{V}$: 
\\ 1) Assigns a nonzero probability $b_{lj}$ to each of its outgoing edges $m_{lj}$, where $v_l \in \mathcal{N}^+_j$, as follows  
\begin{align*}
b_{lj} = \left\{ \begin{array}{ll}
         \frac{1}{1 + \mathcal{D}_j^+}, & \mbox{if $l = j$ or $v_{l} \in \mathcal{N}_j^+$,}\\
         0, & \mbox{if $l \neq j$ and $v_{l} \notin \mathcal{N}_j^+$.}\end{array} \right. 
\end{align*}
\\ 2) Sets $z_j[0] = 1$, $z^s_j[0] = 1$ and $y^s_j[0] = y_j[0]$ (which means that $q^s_j[0] = y_j[0] / 1$). 
\\
\textbf{Iteration}
\\ For $k=0,1,2,\dots$, each node $v_j \in \mathcal{V}$ does the following:
\\ 1) It either transmits $y_j[k]$ and $z_j[k]$ towards a randomly chosen out-neighbor $v_l \in \mathcal{N}_j^+$ (according to the nonzero probability $b_{lj}$) or performs no transmission (according to the nonzero probability $b_{jj}$). If it transmitted towards an out-neighbor, it sets $y_j[k] = 0$ and $z_j[k] = 0$.
\\ 2) It receives $y_i[k]$ and $z_i[k]$ from its in-neighbors $v_i \in \mathcal{N}_j^-$ and sets 
$$
y_j[k+1] = \sum_{v_i \in \mathcal{N}_j^- \cup \{v_j\}} w_{ji}[k] y_i[k] ,
$$
and 
$$
z_j[k+1] = \sum_{v_i \in \mathcal{N}_j^- \cup \{v_j\}} w_{ji}[k] z_i[k] ,
$$
where $w_{ji}[k] = 1$ if node $v_j$ receives values from node $v_i$ (otherwise $w_{ji}[k] = 0$). 
\\ 3) If the following condition holds,
\begin{equation}
z_j[k+1] \geq z^s_j[k],
\end{equation}
it sets $z^s_j[k+1] = z_j[k+1]$, $y^s_j[k+1] = y_j[k+1]$, which means that $q^s_j[k+1] = \frac{y^s_j[k+1]}{z^s_j[k+1]}$. 
\\ 4) It repeats (increases $k$ to $k + 1$ and goes back to Step~1).
\label{algorithm_prob}
\end{varalgorithm}


\begin{example}

Consider the strongly connected digraph $\mathcal{G}_d = (\mathcal{V}, \mathcal{E})$ shown in Fig.~\ref{prob_example}, with $\mathcal{V} = \{ v_1, v_2, v_3, v_4 \}$ and $\mathcal{E} = \{ m_{21}, m_{31}, m_{42}, m_{13}, m_{23}, m_{34} \}$, where each node has initial quantized values $y_1[0] = 5$, $y_2[0] = 3$, $y_3[0] = 7$, and $y_4[0] = 2$ respectively. 
The average $q$ of the initial values of the nodes, is equal to $q = \frac{17}{4}$.

\begin{figure}[h]
\begin{center}
\includegraphics[width=0.25\columnwidth]{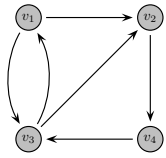}
\caption{Example of digraph for probabilistic quantized averaging.}
\label{prob_example}
\end{center}
\end{figure}

Each node $v_j \in \mathcal{V}$ follows the Initialization steps ($1-2$) in Algorithm~\ref{algorithm_prob}, assigning to each of its outgoing edges $v_l \in \mathcal{N}_j^+ \cup \{v_j\}$ a nonzero probability value $b_{lj}$ equal to $b_{lj} = \frac{1}{1 + \mathcal{D}_j^+}$. 
The assigned values can be seen in the following matrix
\[
\mathcal{B}
=
\begin{bmatrix}
    \frac{1}{3} & 0 & \frac{1}{3} &  0\\ \vspace{-0.35cm} \\
    \frac{1}{3} & \frac{1}{2} & \frac{1}{3} &  0\\ \vspace{-0.35cm} \\
    \frac{1}{3} & 0 & \frac{1}{3} &  \frac{1}{2}\\ \vspace{-0.35cm} \\
    0 & \frac{1}{2} & 0 &  \frac{1}{2}\\ \vspace{-0.35cm} \\
\end{bmatrix}
,
\]
while the initial mass and state variables are shown in Table~\ref{tableProb}. 

For the execution of the proposed algorithm, suppose that at time step $k=0$, nodes $v_1$, $v_3$ and $v_4$ transmit to nodes $v_2$, $v_1$ and $v_3$, respectively, whereas node $v_2$, performs no transmission. 
The mass and state variables for $k=1$ are shown in Table~\ref{tableProbk_1}. 

\begin{center}
\captionof{table}{Initial Mass and State Variables for Fig.~\ref{prob_example}}
\label{tableProb}
{\small 
\begin{tabular}{|c||c|c|c|c|c|}
\hline
Nodes &\multicolumn{5}{c|}{Mass and State Variables for $k=0$}\\
$v_j$ &$y_j[0]$&$z_j[0]$&$y^s_j[0]$&$z^s_j[0]$&$q^s_j[0]$\\
\cline{2-6}
 &  &  &  &  & \\
$v_1$ & 5 & 1 & 5 & 1 & 5 / 1\\
$v_2$ & 3 & 1 & 3 & 1 & 3 / 1\\
$v_3$ & 7 & 1 & 7 & 1 & 7 / 1\\
$v_4$ & 2 & 1 & 2 & 1 & 2 / 1\\
\hline
\end{tabular}
}
\end{center}
\vspace{0.4cm}

\begin{center}
\captionof{table}{Mass and State Variables for Fig.~\ref{prob_example} for $k=1$}
\label{tableProbk_1}
{\small 
\begin{tabular}{|c||c|c|c|c|c|}
\hline
Nodes &\multicolumn{5}{c|}{Mass and State Variables for $k=1$}\\
$v_j$ &$y_j[1]$&$z_j[1]$&$y^s_j[1]$&$z^s_j[1]$&$q^s_j[1]$\\
\cline{2-6}
 &  &  &  &  & \\
$v_1$ & 7 & 1 & 7 & 1 & 7 / 1\\
$v_2$ & 8 & 2 & 8 & 2 & 8 / 2\\
$v_3$ & 2 & 1 & 2 & 1 & 2 / 1\\
$v_4$ & 0 & 0 & 2 & 1 & 2 / 1\\
\hline
\end{tabular}
}
\end{center}
\vspace{0.2cm}

It is important to notice here that nodes $v_1$ and $v_3$ have mass variables $y_1[1] = y_3[0] = 7$, $z_1[1] = z_3[0] = 1$ and $y_3[1] = y_4[0] = 2$, $z_3[1] = z_4[0] = 1$ (and update their state variables), while node $v_2$ has mass variables $y_2[1] = y_1[0] + y_2[0] = 8$, $z_2[1] = z_1[0] + z_2[0] = 2$ (also updating its state variables). 
In the latter case we can say that the mass variables of nodes $v_1$ and $v_2$ will ``merge''. 

Suppose now that at time step $k=1$, nodes $v_1$ and $v_2$ transmit to nodes $v_3$ and $v_4$. 
Node $v_3$, does not perform a transmission while node $v_4$ has no mass to transmit. 
The mass and state variables for $k=2$ are shown in Table~\ref{tableProbk_2}.

\begin{center}
\captionof{table}{Mass and State Variables for Fig.~\ref{prob_example} for $k=2$} 
\label{tableProbk_2}
{\small 
\begin{tabular}{|c||c|c|c|c|c|}
\hline
Nodes &\multicolumn{5}{c|}{Mass and State Variables for $k=2$}\\
$v_j$ &$y_j[2]$&$z_j[2]$&$y^s_j[2]$&$z^s_j[2]$&$q^s_j[2]$\\
\cline{2-6}
 &  &  &  &  & \\
$v_1$ & 0 & 0 & 7 & 1 & 7 / 1\\
$v_2$ & 0 & 0 & 8 & 2 & 8 / 2\\
$v_3$ & 9 & 2 & 9 & 2 & 9 / 2\\
$v_4$ & 8 & 2 & 8 & 2 & 8 / 2\\
\hline
\end{tabular}
}
\end{center}
\vspace{0.2cm}

Then, suppose that at time step $k=2$, node $v_4$ transmits to node $v_3$, while node $v_3$, does not perform a transmission (nodes $v_1$ and $v_2$ have no mass to transmit). 
The mass and state variables for $k=3$ are shown in Table~\ref{tableProbk_3}.

We can see that, at time step $k=3$ all the initial mass variables are ``merged'' in node $v_3$ (i.e., $y_3[3] = y_1[0] + y_2[0] + y_3[0] + y_4[0]$ and $z_3[3] = z_1[0] + z_2[0] + z_3[0] + z_4[0]$). 
Now suppose that during time steps $k=3, 4, 5$ the following transmissions take place: ``$v_3$ transmits to $v_1$'', ``$v_1$ transmits to $v_2$'', ``$v_2$ transmits to $v_4$''. 
The mass and state variables for $k=5$ are shown in Table~\ref{tableProbk_5}.

\begin{center}
\captionof{table}{Mass and State Variables for Fig.~\ref{prob_example} for $k=3$}
\label{tableProbk_3}
{\small 
\begin{tabular}{|c||c|c|c|c|c|}
\hline
Nodes &\multicolumn{5}{c|}{Mass and State Variables for $k=3$}\\
$v_j$ &$y_j[3]$&$z_j[3]$&$y^s_j[3]$&$z^s_j[3]$&$q^s_j[3]$\\
\cline{2-6}
 &  &  &  &  & \\
$v_1$ & 0 & 0 & 7 & 1 & 7 / 1\\
$v_2$ & 0 & 0 & 8 & 2 & 8 / 2\\
$v_3$ & 17 & 4 & 17 & 4 & 17 / 4\\
$v_4$ & 0 & 0 & 8 & 2 & 8 / 2\\
\hline
\end{tabular}
}
\end{center}
\vspace{0.4cm}

\begin{center}
\captionof{table}{Mass and State Variables for Fig.~\ref{prob_example} for $k=5$}
\label{tableProbk_5}
{\small 
\begin{tabular}{|c||c|c|c|c|c|}
\hline
Nodes &\multicolumn{5}{c|}{Mass and State Variables for $k=3$}\\
$v_j$ &$y_j[5]$&$z_j[5]$&$y^s_j[5]$&$z^s_j[5]$&$q^s_j[5]$\\
\cline{2-6}
 &  &  &  &  & \\
$v_1$ & 0 & 0 & 17 & 4 & 17 / 4\\
$v_2$ & 0 & 0 & 17 & 4 & 17 / 4\\
$v_3$ & 0 & 0 & 17 & 4 & 17 / 4\\
$v_4$ & 17 & 4 & 17 & 4 & 17 / 4\\
\hline
\end{tabular}
}
\end{center}
\vspace{0.2cm}

From Table~\ref{tableProbk_5}, we can see that for $k \geq 5$ it holds that 
$$
q_j^s[k] = q = \frac{17}{4} ,
$$
for every $v_j \in \mathcal{V}$, which means that every node $v_j$ will eventually obtain a quantized fraction $q_j^s$, which is equal to the average $q$ of the initial values of the nodes. 
\end{example}

\begin{remark}
From the previous example, it is important to notice that, once the initial mass variables ``merge'' at time step $k = 3$, they \textit{remain} ``merged'' during the operation of Algorithm~\ref{algorithm_prob} for every time step $k \geq 3$. 
\end{remark}


We are now ready to prove that during the operation of Algorithm~\ref{algorithm_prob} each agent obtains two integer values $y^s$ and $z^s$, the ratio of which is equal to the average $q$ of the initial values of the nodes. 

\begin{prop}
\label{PROP1_prob}
Consider a strongly connected digraph $\mathcal{G}_d = (\mathcal{V}, \mathcal{E})$ with $n=|\mathcal{V}|$ nodes and $m=|\mathcal{E}|$ edges, and $z_j[0] = 1$ and $y_j[0] \in \mathbb{Z}$ for every node $v_j \in \mathcal{V}$ at time step $k=0$. 
Suppose that each node $v_j \in \mathcal{V}$ follows the Initialization and Iteration steps as described in Algorithm~\ref{algorithm_prob}. 
Let $\mathcal{V}^+[k] \subseteq \mathcal{V}$ be the set of nodes $v_j$ with positive mass variable $z_j[k]$ at iteration $k$ (i.e., $\mathcal{V}^+[k] = \{ v_j \in \mathcal{V} \; | \; z_j[k] > 0 \}$). 
During the execution of Algorithm~\ref{algorithm_prob}, for every $k \geq 0$, we have that
$$
1 \leq | \mathcal{V}^+[k+1] | \leq | \mathcal{V}^+[k] | \leq n .
$$
\end{prop}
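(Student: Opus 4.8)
The plan is to reduce the entire statement to a single bookkeeping identity describing how the mass variables $z_j$ are routed during one iteration, and then to read off all three inequalities from it. The upper bound $|\mathcal{V}^+[k]| \leq n$ is immediate, since $\mathcal{V}^+[k] \subseteq \mathcal{V}$ and $|\mathcal{V}| = n$; hence the actual content lies in the lower bound and in the monotonicity.

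First I would formalize Step~1 by introducing a routing map. At iteration $k$ each node $v_i$ makes exactly one decision: either it performs no transmission, or it transmits its \emph{entire} mass to a single out-neighbor (the sender then zeroes its own $z_i$). Accordingly I define $g_k : \mathcal{V} \to \mathcal{V}$ by $g_k(v_i) = v_i$ when $v_i$ does not transmit and $g_k(v_i) = v_l$ when $v_i$ transmits to $v_l \in \mathcal{N}_i^+$. This is well defined precisely because each transmission has a single destination and the whole mass moves there.

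Next I would establish the key flow identity
\begin{equation}\label{flow}
z_j[k+1] \;=\; \sum_{v_i \,:\, g_k(v_i) = v_j} z_i[k] ,
\end{equation}
by tracking each term of the receiving sum in Step~2. The careful point here, and the step I expect to be the main obstacle, is the ``set to zero'' bookkeeping: the symbol $z_i[k]$ in Step~2 denotes the pre-transmission mass carried in a message when $v_i$ transmitted to $v_j$, but denotes the post-transmission stored value (which is $0$) for the sender's own retained term. Reconciling these two roles of $z_i[k]$ is exactly what makes \eqref{flow} hold, and the contributions from any zero-mass ``transmissions'' vanish automatically. Once \eqref{flow} is in place, both the conservation and the support statements become corollaries.

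Finally I would read off the three claims. Summing \eqref{flow} over $j$ and using that each $v_i$ has exactly one image $g_k(v_i)$ yields $\sum_j z_j[k+1] = \sum_j z_j[k]$; by induction from $z_j[0]=1$ the total mass is invariant and equal to $n$. Since every $z_j[k] \in \mathbb{N}_0$ and the total is $n > 0$, at least one node carries positive mass, giving $|\mathcal{V}^+[k+1]| \geq 1$. For the monotonicity, nonnegativity of all summands in \eqref{flow} shows that $z_j[k+1] > 0$ holds iff some $v_i$ with $g_k(v_i) = v_j$ has $z_i[k] > 0$, i.e. $\mathcal{V}^+[k+1] = g_k(\mathcal{V}^+[k])$. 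Since the image of a set under a map has cardinality at most that of the set, $|\mathcal{V}^+[k+1]| = |g_k(\mathcal{V}^+[k])| \leq |\mathcal{V}^+[k]|$, completing the chain $1 \leq |\mathcal{V}^+[k+1]| \leq |\mathcal{V}^+[k]| \leq n$.
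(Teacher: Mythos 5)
Your proof is correct, and it is a tighter argument than the paper's own, though both rest on the same underlying object. The paper encodes each round's transmissions as a binary column-stochastic matrix $\mathcal{W}[k]$ (so that $z[k+1] = \mathcal{W}[k]\,z[k]$) and then argues by cases on the rows of $\mathcal{W}[k]$: if every row has exactly one nonzero entry, the count $|\mathcal{V}^+|$ is preserved; if some row has two nonzero entries, two masses merge and the count drops; general configurations are dispatched with the phrase ``by extending the above analysis,'' and zero-mass nodes are assumed without loss of generality not to transmit. Your routing map $g_k$ is precisely the function whose matrix is $\mathcal{W}[k]$ (one $1$ per column), but you replace the case enumeration with a single flow identity plus the observation $\mathcal{V}^+[k+1] = g_k(\mathcal{V}^+[k])$, so monotonicity follows from the elementary fact that the image of a set under a map cannot have larger cardinality --- no WLOG, no scenario-by-scenario induction, and zero-mass ``transmissions'' are handled automatically. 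You also explicitly prove the lower bound $1 \leq |\mathcal{V}^+[k+1]|$ via conservation of the total mass $\sum_j z_j[k] \equiv n$, a point the paper's proof leaves implicit (it only argues the non-increasing part). The one step to keep explicit in a final write-up is the bookkeeping you already flagged: in the receiving sum, the retained term $w_{jj}[k]\,z_j[k]$ refers to the post-transmission stored value (zero if $v_j$ transmitted), whereas incoming messages carry pre-transmission values; once that is stated, your flow identity and the paper's matrix equation coincide exactly.
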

 
\begin{proof}
During the Iteration Steps~$1$ and $2$ of Algorithm~\ref{algorithm_prob}, at time step $k$, we have that each node $v_j \in \mathcal{V}$ transmits $z_j[k]$ and $y_j[k]$ towards a randomly chosen out-neighbor $v_l \in \mathcal{N}_j^+$, or performs no transmission. 
Then, it receives $y_i[k]$ and $z_i[k]$ from its in-neighbors $v_i \in \mathcal{N}_j^-$ and sets $y_j[k+1] = \sum_{v_i \in \mathcal{N}_j^- \cup \{v_j\}} w_{ji}[k] y_i[k]$, and $z_j[k+1] = \sum_{v_i \in \mathcal{N}_j^- \cup \{v_j\}} w_{ji}[k] z_i[k]$. 
The Iteration Steps~$1$ and $2$ of Algorithm~\ref{algorithm_prob}, during time step $k$, can be expressed according to the following equations 
\begin{equation}\label{y_exch}
y[k+1] = \mathcal{W}[k] \ y[k] ,
\end{equation}
and 
\begin{equation}\label{z_exch}
z[k+1] = \mathcal{W}[k] \ z[k] ,
\end{equation}
where $y[k] = [y_1[k] \ ... \ y_n[k]]^{\rm T}$, $z[k] = [z_1[k] \ ... \ z_n[k]]^{\rm T}$ and $\mathcal{W}[k] = [w_{lj}[k]]$ is an $n \times n$ binary (i.e., for every $k$, $w_{lj}[k]$ is either equal to $1$ or $0$, for every $(v_l,v_j) \in \mathcal{E}$), column stochastic matrix. 

Focusing on (\ref{z_exch}), during time step $k_0$, let us assume without loss of generality that
$z[k_0] = \left [ z_1[k_0] \dots z_{p_0}[k_0] \; 0 \dots \; 0 \right ]^{\rm T}$, which means that we have $z_i[k_0] > 0$, $\forall \ v_i \in \{ v_1, \cdots, v_{p_0} \}$ and $z_l[k_0] = 0$, $\forall \ v_l \in \mathcal{V} - \{ v_1, \cdots, v_{p_0} \}$. 
We can assume without loss of generality that the nodes with zero mass do not transmit (transmit to themselves). 
Let us consider the scenario where $\sum_{v_i \in \mathcal{N}_j^- \cup \{v_j\}} w_{ji}[k_0] = 1$, $\forall \ v_j \in \mathcal{V}$ (i.e., for every row of $\mathcal{W}[k_0]$ \textit{exactly} one element is equal to $1$ and all the other are equal to zero). 
This means that every node $v_j$ will receive exactly one mass variable $z_i[k_0]$ (the bottom $n-p_0$ nodes receive their own mass). 
Since, at time step $k_0$, we have $p_0$ nodes with nonzero mass variables, we have that at time step $k_0 + 1$, exactly $p_0$ nodes have a nonzero mass variable. 
As a result, for this scenario, we have $| \mathcal{V}^+[k_0+1] | = | \mathcal{V}^+[k_0] |$. 

Without loss of generality, let us consider the scenario where $w_{ji_1}[k_0] = 1$, $w_{ji_2}[k_0] = 1$ (where $v_{i_1}, v_{i_2} \ \in \mathcal{N}_j^- \cup \{ v_j \}$) and $w_{ji}[k_0] = 0, \forall \ v_i \in \{ \mathcal{N}_j^- \cup \{ v_j \} \} - \{ v_{i_1}, v_{i_2} \}$ (i.e., the $j^{th}$ row of matrix $\mathcal{W}[k_0]$ has exactly $2$ elements equal to $1$ and all the other equal to zero). 
Also, let us assume that $\sum_{v_i \in \mathcal{N}_l^- \cup \{v_l\}} w_{li}[k_0] \leq 1$, $\forall \ v_l \in \mathcal{V} - \{ v_j \}$ (i.e., for every row of $\mathcal{W}[k_0]$ (except row $j$) \textit{at most} one element is equal to $1$ and all the other are equal to zero). 
The above assumptions, regarding matrix $\mathcal{W}$, mean that, during time step $k_0$, only node $v_j$ will receive two mass variables (from nodes $v_{i_1}$ and $v_{i_2}$) and all the other nodes will receive at most one mass variable. 
We have that $z_{j}[k_0 + 1] = z_{i_1}[k_0] + z_{i_2}[k_0]$ and $z_{l}[k_0 + 1] = z_{i}[k_0]$, for $v_l \in \mathcal{V} - \{ v_j \}$ and some $v_i \in \mathcal{V} - \{ v_{i_1}, v_{i_2} \}$ (i.e., node $v_j$ received two nonzero mass variables while all the other nodes received at most one nonzero mass variable, also including its own mass variable). 
Since, at time step $k_0$, we had $p_0$ nodes with nonzero mass variables and at time step $k_0 + 1$ node $v_j$ received (and summed) two nonzero mass variables, while all the other nodes received at most one nonzero mass variable, this means that, at time step $k_0 + 1$, we have $p_0 - 1$ nodes with nonzero mass variables. 
This means that $| \mathcal{V}^+[k_0+1] | < | \mathcal{V}^+[k_0] |$.

%

By extending the above analysis for scenarios where each row of $\mathcal{W}[k]$, at different time steps $k$, has multiple elements equal to $1$ (but $\mathcal{W}[k]$ remains column stochastic) we can see that the number of nodes $v_j$ with nonzero mass variable $z_{j}[k] > 0$ is non-increasing and thus we have $| \mathcal{V}^+[k+1] | \leq | \mathcal{V}^+[k] |$, $\forall \ k \in \mathbb{N}$. 
\end{proof}

\begin{prop}
\label{PROP2_prob}
Consider a strongly connected digraph $\mathcal{G}_d = (\mathcal{V}, \mathcal{E})$ with $n=|\mathcal{V}|$ nodes and $m=|\mathcal{E}|$ edges and $z_j[0] = 1$ and $y_j[0] \in \mathbb{Z}$ for every node $v_j \in \mathcal{V}$ at time step $k=0$. 
Suppose that each node $v_j \in \mathcal{V}$ follows the Initialization and Iteration steps as described in Algorithm~\ref{algorithm_prob}. 
With probability one, we can find $k_0 \in \mathbb{N}$, so that for every $k \geq k_0$ we have 
$$
y^s_j[k] = \sum_{l=1}^{n}{y_l[0]}  \ \ \text{and} \ \ z^s_j[k] = n ,
$$
which means that 
$$
q^s_j[k] = \frac{\sum_{l=1}^{n}{y_l[0]}}{n} ,
$$
for every $v_j \in \mathcal{V}$ (i.e., for $k \geq k_0$ every node $v_j$ has calculated $q$ as the ratio of two integer values). 
\end{prop}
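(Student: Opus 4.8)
The plan is to split the argument into a \emph{merging} phase and a \emph{dissemination} phase, both resting on a single conservation law. Since each $\mathcal{W}[k]$ in (\ref{y_exch})--(\ref{z_exch}) is binary and column stochastic, summing the entries of $z[k+1] = \mathcal{W}[k] z[k]$ gives $\mathbf{1}^{\rm T} z[k] = \mathbf{1}^{\rm T} z[0] = n$ and likewise $\mathbf{1}^{\rm T} y[k] = \sum_{l=1}^{n} y_l[0]$ for all $k$. Because the $z_j[k]$ are nonnegative integers summing to $n$, no node can ever hold $z_j[k] > n$, so $z = n$ is the \emph{maximal} attainable mass; moreover, since $y$ and $z$ are transported by the same matrix from the joint initialization $(y_j[0],1)$, a node holding $z_j[k] = n$ must hold all unit masses and hence $y_j[k] = \sum_{l=1}^{n} y_l[0]$. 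This is precisely what will make the processing test $z_j[k+1] \ge z^s_j[k]$ fire exactly when the full mass arrives somewhere.

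For the merging phase, recall from Proposition~\ref{PROP1_prob} that $|\mathcal{V}^+[k]|$ is non-increasing and at least $1$; I would show it reaches $1$ in finite time with probability one by exhibiting a configuration-independent positive lower bound on the probability that it strictly decreases over a bounded window. Let $\beta = \min_{v_j} 1/(1+\mathcal{D}_j^+) \ge 1/n$ be the smallest transmission probability. Whenever $|\mathcal{V}^+[k]| \ge 2$, pick two nodes $v_a, v_b$ with positive mass and a directed path from $v_a$ to $v_b$ of length $r \le n-1$ (strong connectivity). Prescribe the event in which, over the next $r$ steps, the node currently holding the advancing mass transmits it one hop along this path while $v_b$ keeps its own mass by self-transmitting; each prescribed single-node choice has probability at least $\beta$, so the whole event has probability at least $\beta^{2(n-1)} =: \rho > 0$, independent of $k$. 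Under it the advancing mass eventually lands on the still-occupied $v_b$ and vacates its previous location, so $|\mathcal{V}^+|$ strictly decreases (incidental collisions among the other masses only help). Since $|\mathcal{V}^+|$ can decrease at most $n-1$ times and each length-$n$ window achieves a decrease with probability $\ge \rho$ while $|\mathcal{V}^+| \ge 2$, a geometric/Borel--Cantelli bound gives a finite $k_1$ with $|\mathcal{V}^+[k]| = 1$ for all $k \ge k_1$, almost surely; the unique surviving node then holds $z = n$ and $y = \sum_{l=1}^{n} y_l[0]$.

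For the dissemination phase, for $k \ge k_1$ the entire mass sits at one node and is, at each step, kept or forwarded to a uniformly chosen out-neighbor, i.e.\ it performs a random walk on the strongly connected $\mathcal{G}_d$ with transition probabilities $\ge \beta$; such a walk visits every node in finite time with probability one (again a geometric bound along paths of length $\le n-1$). Each time the token reaches a node $v_j$ we have $z_j[k+1] = n \ge z^s_j[k]$, so Step~3 sets $z^s_j = n$, $y^s_j = \sum_{l=1}^{n} y_l[0]$, and hence $q^s_j = q$. This is \emph{persistent}: a later update would need $z_j[k+1] \ge n$, forcing $z_j[k+1] = n$ and the same values, while whenever the token is elsewhere $z_j[k+1] = 0 < n$ and no update occurs. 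Taking $k_0$ to be the (almost surely finite) first time by which the token has visited every node yields the claimed equalities for all $k \ge k_0$.

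The main obstacle is the merging phase, specifically certifying the uniform lower bound $\rho > 0$ on a merge within a bounded window while many masses move simultaneously; the resolution above is that one explicitly routed mass against a held second mass already forces a strict decrease, so the simultaneous, uncontrolled motion of the other masses can only accelerate merging. Everything else---conservation of the total mass, maximality of $z = n$, and persistence of a node's state once it equals the maximum---is routine.
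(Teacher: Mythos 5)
Your proof is correct and takes essentially the same route as the paper's: a merging phase in which, over each bounded window, two nonzero masses coalesce with a configuration-independent positive probability (the paper bounds this by $\bigl( \prod_{j=1}^{n}(1+\mathcal{D}_j^+)^{-1} \bigr)^2$, matching your $\beta^{2(n-1)}$ in spirit), so that all masses merge in finite time almost surely, followed by a dissemination phase in which the merged token's random walk visits every node and triggers the update $z^s_j = n$, $y^s_j = \sum_{l=1}^{n} y_l[0]$. Your write-up is in fact more careful than the paper's sketch---the explicit conservation law, the path-routing construction of the merge event (which works in any strongly connected digraph), and the persistence argument are all only implicit in the paper---but the underlying approach is identical.
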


\begin{proof}
From Proposition~\ref{PROP1_prob} we have that $| \mathcal{V}^+[k+1] | \leq | \mathcal{V}^+[k] |$ (i.e., the number of nonzero mass variables is non-increasing). 
We will show that the number of nonzero mass variables is decreasing after a finite number of steps, until, at some $k_0 \in \mathbb{N}$, we have $y_j[k_0] =\sum_{l=1}^{n}{y_l[0]}  \ \ \text{and} \ \ z_j[k_0] = n$, for some node $v_j \in \mathcal{V}$, and $y_i[k_0] = 0  \ \ \text{and} \ \ z_i[k_0] = 0$, for each $ v_i \in \mathcal{V} - \{ v_j \}$). 
In this scenario, (\ref{alpha_z_y}) and (\ref{alpha_q}) hold for each node $v_j$ for the case where $\alpha = 1$. 

The Iteration Steps~$1$ and $2$ of Algorithm~\ref{algorithm_prob}, during time step $k$, can be expressed according to (\ref{y_exch}) and (\ref{z_exch}), where $y[k] = [y_1[k] \ ... \ y_n[k]]^{\rm T}$, $z[k] = [z_1[k] \ ... \ z_n[k]]^{\rm T}$, and $\mathcal{W}[k] = [w_{lj}[k]]$ is an $n \times n$ binary, column stochastic matrix. 
Focusing on (\ref{z_exch}), suppose that, during time step $k_0$, we have $z_i[k_0] > 0$, $z_j[k_0] > 0$ and $w_{li}[k_0] = 1$, $w_{lj}[k_0] = 1$. 
This scenario will occur with probability equal to $(1 + \mathcal{D}_i^+)^{-1}(1 + \mathcal{D}_j^+)^{-1}$ (i.e., as long as nodes $v_i$ and $v_j$ both transmit towards node $v_l$). 
Furthermore, we have that the mass variables of $v_i$ and $v_j$ will not ``merge'' in $v_l$ with probability $1 - (1 + \mathcal{D}_i^+)^{-1}(1 + \mathcal{D}_j^+)^{-1}$.
By extending the above analysis we have that, every $n$ time steps, the probability that two nonzero mass variables ``merge'' is positive and lower bounded by $\big( \prod_{j=1}^{n} (1 + \mathcal{D}_j^+)^{-1} \big)^2$ (i.e., $\text{P}_{\text{merge}} \geq \big( \prod_{j=1}^{n}(1 + \mathcal{D}_j^+)^{-1} \big)^2$).

Thus, from the execution of Algorithm~\ref{algorithm_prob}, we have that the probability that all nonzero mass variables ``merge'' will be arbitrarily close to $1$ for a sufficiently large $k$. 
This means that $\exists k_0 \in \mathbb{N}$ for which $y_j[k_0] =\sum_{l=1}^{n}{y_l[0]},  \ \ \text{and} \ \ z_j[k_0] = n$, for some node $v_j \in \mathcal{V}$, and $y_i[k_0] = 0,  \ \ \text{and} \ \ z_i[k_0] = 0$, for each $ v_i \in \mathcal{V} - \{ v_j \}$. 
Once this ``merging'' of all nonzero mass variables occurs, we have that the nonzero mass variables of node $v_j$ will update the state variables of every node $v_i \in \mathcal{V}$ (because it eventually will be forward to all other nodes) which means that $\exists k_1 \in \mathbb{N}$ (where $k_1 > k_0$) for which $y_i^s[k_1] =\sum_{l=1}^{n}{y_l[0]}  \ \ \text{and} \ \ z_i^s[k_1] = n$, for every node $v_i \in \mathcal{V}$. 
This means that after a finite number of steps, (\ref{alpha_z_y}) and (\ref{alpha_q}) will hold for every node $v_j \in \mathcal{V}$ for the case where $\alpha = 1$. 
\end{proof}

\begin{remark}
It is interesting to note that during the operation of Algorithm~\ref{algorithm_prob}, after a finite number of steps $k_0$, the state variables of each node $v_j \in \mathcal{V}$, become equal to
$y^s_j[k] = \sum_{l=1}^{n}{y_l[0]}$, $z^s_j[k] = n$, so that
$$
q^s_j[k] = \frac{\sum_{l=1}^{n}{y_l[0]}}{n} ,
$$
for $k \geq k_0$. 
This means that (\ref{alpha_z_y}) and (\ref{alpha_q}) will hold for each node $v_j$ for the case where $\alpha = 1$. 
However, this does not necessarily hold for the distributed algorithm presented in the following section. 
\end{remark}

\begin{remark}
It is also worth pointing out that during the operation of Algorithm~\ref{algorithm_prob}, once (\ref{alpha_z_y}) and (\ref{alpha_q}) hold for each node $v_j$ for the case where $\alpha = 1$, then each node also obtains knowledge regarding the total number of nodes in the digraph, since $z^s_j[k] = n$, $\forall v_j \in \mathcal{V}$, which may be useful for determining the number of agents in the network. 
\end{remark}

%
%
%
%
\section{EVENT-TRIGGERED QUANTIZED AVERAGING ALGORITHM}\label{DetAlgorithm}

In this section we propose a distributed algorithm in which the nodes receive quantized messages and perform transmissions according to a set of deterministic \textit{conditions}, so that they reach quantized average consensus on their initial values. The operation of the proposed distributed algorithm is summarized below.

\noindent
\textbf{Initialization:}
Each node $v_j$ assigns to each of its outgoing edges $v_l \in \mathcal{N}^+_j$ a \textit{unique order} $P_{lj}$ in the set $\{0,1,..., \mathcal{D}_j^+ -1\}$, which will be used to transmit messages to its out-neighbors in a round-robin fashion. 
Node $v_j$ has initial value $y_j[0]$ and sets its state variables, for time step $k=0$, as $z_j[0] = 1$, $z^s_j[0] = 1$ and $y^s_j[0] = y_j[0]$, which means that $q^s_j[0] = y_j[0] / 1$.
Then, it chooses an out-neighbor $v_l \in \mathcal{N}_j^+$ (according to the predetermined order $P_{lj}$) and transmits $z_j[0]$ and $y_j[0]$ to that particular neighbor. 
Then, it sets $y_j[0] = 0$ and $z_j[0] = 0$ (since performed a transmission).

\noindent
The iteration involves the following steps:

\noindent
\textbf{Step 1. Receiving:} Each node $v_j$ receives messages $y_i[k]$ and $z_i[k]$ from its in-neighbors $v_i \in \mathcal{N}_j^-$ and sums them along with its stored messages $y_j[k]$ and $z_j[k]$ to obtain
$$
y_j[k+1] = \sum_{v_i \in \mathcal{N}_j^- \cup \{v_j\}} w_{ji}[k]y_i[k] ,
$$
and 
$$
z_j[k+1] = \sum_{v_i \in \mathcal{N}_j^- \cup \{v_j\}} w_{ji}[k]z_i[k] ,
$$
where $w_{ji}[k] = 0$ if no message is received from in-neighbor $v_i \in \mathcal{N}_j^-$; otherwise $w_{ji}[k] = 1$. 

\noindent
\textbf{Step 2. Event-Triggered Conditions:} Node $v_j$ checks the following conditions:
\begin{enumerate}
\item It checks whether $z_j[k+1]$ is greater than $z^s_j[k]$,
\item If $z_j[k+1]$ is equal to $z^s_j[k]$, it checks whether $y_j[k+1]$ is greater than (or equal to) $y^s_j[k]$.
\end{enumerate}
If one of the above two conditions holds, it sets $y^s_j[k + 1] = y_j[k+1]$, $z^s_j[k + 1] = z_j[k+1]$ and $q_j^s[k+1] = \frac{y_j^s[k+1]}{z_j^s[k+1]}$. 

\noindent
\textbf{Step 3. Transmitting:} If the event-trigger conditions above do not hold, no transmission is performed. 
Otherwise, if the event-trigger conditions above hold, node $v_j$ chooses an out-neighbor $v_l \in \mathcal{N}_j^+$ according to the order $P_{lj}$ (in a round-robin fashion) and transmits $z_j[k+1]$ and $y_j[k+1]$. 
Then, since it transmitted its stored mass, it sets $y_j[k+1] = 0$ and $z_j[k+1] = 0$. 
Then, $k$ is set to $k+1$ and the iteration repeats (it goes back to Step~1). 

This event-based quantized mass transfer process is summarized as Algorithm~\ref{algorithm1}, where each node $v_j$ at time step $k$ maintains mass variables $y_j[k]$ and $z_j[k]$ and state variables $y^s_j[k]$ and $z^s_j[k]$ (and $q^s_j[k] = y^s_j[k] / z^s_j[k]$). 
Note that the event trigger conditions effectively imply that no transmission is performed if $z_j[k]=0$.


\noindent
\vspace{-0.5cm}    
\begin{varalgorithm}{2}
\caption{Deterministic Quantized Average Consensus}
\textbf{Input} 
\\ 1) A strongly connected digraph $\mathcal{G}_d = (\mathcal{V}, \mathcal{E})$ with $n=|\mathcal{V}|$ nodes and $m=|\mathcal{E}|$ edges. 
\\ 2) For every $v_j$ we have $y_j[0] \in \mathbb{Z}$. 
\\
\textbf{Initialization} 
\\ Every node $v_j \in \mathcal{V}$: 
\\ 1) Assigns to each of its outgoing edges $v_l \in \mathcal{N}^+_j$ a \textit{unique order} $P_{lj}$ in the set $\{0,1,..., \mathcal{D}_j^+ -1\}$.
\\ 2) Sets $z_j[0] = 1$, $z^s_j[0] = 1$ and $y^s_j[0] = y_j[0]$ (which means that $q^s_j[0] = y_j[0] / 1$). 
\\ 3) Chooses an out-neighbor $v_l \in \mathcal{N}_j^+$ according to the predetermined order $P_{lj}$ (i.e., it chooses $v_l \in \mathcal{N}_j^+$ such that $P_{lj}=0$) and transmits $z_j[0]$ and $y_j[0]$ to this out-neighbor. Then, it sets $y_j[0] = 0$ and $z_j[0] = 0$.
\\
\textbf{Iteration}
\\ For $k=0,1,2,\dots$, each node $v_j \in \mathcal{V}$ does the following:
\\ 1) It receives $y_i[k]$ and $z_i[k]$ from its in-neighbors $v_i \in \mathcal{N}_j^-$ and sets 
$$
y_j[k+1] = \sum_{v_i \in \mathcal{N}_j^- \cup \{v_j\}} w_{ji}[k]y_i[k] ,
$$
and 
$$
z_j[k+1] = \sum_{v_i \in \mathcal{N}_j^- \cup \{v_j\}} w_{ji}[k]z_i[k] ,
$$
where  $w_{ji}[k] = 0$ if no message is received (otherwise $w_{ji}[k] = 1$). 
\\ 2) \underline{Event triggered conditions:} If one of the following two conditions hold, node $v_j$ performs Steps $3$ and $4$ below, otherwise it skips Steps $3$ and $4$.
\\ Condition~$1$: $z_j[k+1] > z^s_j[k]$.
\\ Condition~$2$: $z_j[k+1] = z^s_j[k]$ and $y_j[k+1] \geq y^s_j[k]$.
\\ 3) It sets $z^s_j[k+1] = z_j[k+1]$ and $y^s_j[k+1] = y_j[k+1]$ which implies that 
$$
q^s_j[k+1] = \frac{y^s_j[k+1]}{z^s_j[k+1]} .
$$
4) It chooses an out-neighbor $v_l \in \mathcal{N}_j^+$ according to the order $P_{lj}$ (in a round-robin fashion) and transmits $z_j[k+1]$ and $y_j[k+1]$. Then it sets $y_j[k+1] = 0$ and $z_j[k+1] = 0$.
\\ 5) It repeats (increases $k$ to $k + 1$ and goes back to Step~1).
\label{algorithm1}
\end{varalgorithm}



We now analyze the functionality of the distributed algorithm and we prove that it allows all agents to reach quantized average consensus after a finite number of steps. Depending on the graph structure and the initial mass variables of each node, we have the following two possible scenarios:
\begin{enumerate}
\item[A.] Full Mass Summation (i.e., there exists $k_0 \in \mathbb{N}$ where we have $y_j[k_0] =\sum_{l=1}^{n}{y_l[0]}  \ \ \text{and} \ \ z_j[k_0] = n$, for some node $v_j \in \mathcal{V}$, and $y_i[k_0] = 0  \ \ \text{and} \ \ z_i[k_0] = 0$, for each $ v_i \in \mathcal{V} - \{ v_j \}$). 
In this scenario (\ref{alpha_z_y}) and (\ref{alpha_q}) hold for each node $v_j$ for the case where $\alpha = 1$. 
\item[B.] Partial Mass Summation (i.e., there exists $k_0 \in \mathbb{N}$ so that for every $k \geq k_0$ there exists a set $\mathcal{V}^p[k] \subseteq \mathcal{V}$ in which we have $y_j[k] = y_i[k]$ and $z_j[k] = z_i[k]$, $\forall v_j, v_i \in \mathcal{V}^p[k]$ and $y_l[k] = 0  \ \ \text{and} \ \ z_l[k] = 0$, for each $ v_l \in \mathcal{V} - \mathcal{V}^p[k]$). 
In this scenario (\ref{alpha_z_y}) and (\ref{alpha_q}) hold for each node $v_j$ for the case where $\alpha = | \mathcal{V}^p[k] |$.
\end{enumerate}


An example regarding the scenario of ``Partial Mass Summation'' is given below. 

\begin{example}

Consider a strongly connected digraph $\mathcal{G}_d = (\mathcal{V}, \mathcal{E})$, shown in Fig.~\ref{partial_example}, with $\mathcal{V} = \{ v_1, v_2, v_3, v_4 \}$ and $\mathcal{E} = \{ m_{21}, m_{32}, m_{43}, m_{14} \}$ where each node has an initial quantized value $y_1[0] = 9$, $y_2[0] = 3$, $y_3[0] = 9$ and $y_4[0] = 3$ respectively. 
We have that the average of the initial values of the nodes, is equal to $q = \frac{24}{4}$. 

\begin{figure}[h]
\begin{center}
\includegraphics[width=0.25\columnwidth]{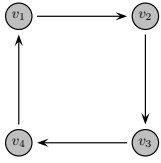}
\caption{Example of digraph for partial mass summation.}
\label{partial_example}
\end{center}
\end{figure}

At time step $k=0$ the initial mass and state variables for nodes $v_1, v_2, v_3, v_4$ are shown in Table~\ref{table_det}. 

\begin{center}
\captionof{table}{Initial Mass and State Variables for Fig.~\ref{partial_example}}
\label{table_det}
{\small 
\begin{tabular}{|c||c|c|c|c|c|}
\hline
Nodes &\multicolumn{5}{c|}{Mass and State Variables for $k=0$}\\
$v_j$ &$y_j[0]$&$z_j[0]$&$y^s_j[0]$&$z^s_j[0]$&$q^s_j[0]$\\
\cline{2-6}
 &  &  &  &  & \\
$v_1$ & 9 & 1 & 9 & 1 & 9 / 1\\
$v_2$ & 3 & 1 & 3 & 1 & 3 / 1\\
$v_3$ & 9 & 1 & 9 & 1 & 9 / 1\\
$v_4$ & 3 & 1 & 3 & 1 & 3 / 1\\
\hline
\end{tabular}
}
\end{center}
\vspace{0.2cm}

\noindent
Then, during time step $k=0$, every node $v_j$ will transmit its mass variables $y_j[0]$ and $z_j[0]$ (since the event-triggered conditions hold for every node). 
The mass and state variables of every node at $k=1$ are shown in Table~\ref{table_det_1}.

\noindent
It is important to notice here that, for time step $k=1$, nodes $v_1$ and $v_3$ have mass variables equal to $y_1[1] = 3$, $z_1[1] = 1$ and $y_3[1] = 3$, $z_3[1] = 1$ but the corresponding state variables are equal to $y^s_1[1] = 9$, $z^s_1[1] = 1$ and $y^s_3[1] = 9$, $z^s_3[1] = 1$. 
This means that at time step $k=1$, the event-triggered conditions do \textit{not} hold for nodes $v_1$ and $v_3$; thus, these nodes will not transmit their mass variables (i.e., they will not execute Steps~$3$ and $4$ of Algorithm~\ref{algorithm1}). 
The mass and state variables of every node at $k=2$ are shown in Table~\ref{table_det_2}.

\begin{center}
\captionof{table}{Mass and State Variables for Fig.~\ref{partial_example} for $k=1$}
\label{table_det_1}
{\small 
\begin{tabular}{|c||c|c|c|c|c|}
\hline
Nodes &\multicolumn{5}{c|}{Mass and State Variables for $k=1$}\\
$v_j$ &$y_j[1]$&$z_j[1]$&$y^s_j[1]$&$z^s_j[1]$&$q^s_j[1]$\\
\cline{2-6}
 &  &  &  &  & \\
$v_1$ & 3 & 1 & 9 & 1 & 9 / 1\\
$v_2$ & 9 & 1 & 9 & 1 & 9 / 1\\
$v_3$ & 3 & 1 & 9 & 1 & 9 / 1\\
$v_4$ & 9 & 1 & 9 & 1 & 9 / 1\\
\hline
\end{tabular}
}
\end{center}
\vspace{0.4cm}

\begin{center}
\captionof{table}{Mass and State Variables for Fig.~\ref{partial_example} for $k=2$} 
\label{table_det_2}
{\small 
\begin{tabular}{|c||c|c|c|c|c|}
\hline
Nodes &\multicolumn{5}{c|}{Mass and State Variables for $k=2$}\\
$v_j$ &$y_j[2]$&$z_j[2]$&$y^s_j[2]$&$z^s_j[2]$&$q^s_j[2]$\\
\cline{2-6}
 &  &  &  &  & \\
$v_1$ & 12 & 2 & 12 & 2 & 12 / 2\\
$v_2$ & 0 & 0 & 9 & 1 & 9 / 1\\
$v_3$ & 12 & 2 & 12 & 2 & 12 / 2\\
$v_4$ & 0 & 0 & 9 & 1 & 9 / 1\\
\hline
\end{tabular}
}
\end{center}
\vspace{0.2cm}

During time step $k=2$ we can see that the event-triggered conditions hold for nodes $v_1$ and $v_3$ which means that they will transmit their mass variables towards nodes $v_2$ and $v_4$ respectively. 
The mass and state variables of every node for $k=3$ are shown in Table~\ref{table_det_3}.

\begin{center}
\captionof{table}{Mass and State Variables for Fig.~\ref{partial_example} for $k=3$}
\label{table_det_3}
{\small 
\begin{tabular}{|c||c|c|c|c|c|}
\hline
Nodes &\multicolumn{5}{c|}{Mass and State Variables for $k=3$}\\
$v_j$ &$y_j[3]$&$z_j[3]$&$y^s_j[3]$&$z^s_j[3]$&$q^s_j[3]$\\
\cline{2-6}
 &  &  &  &  & \\
$v_1$ & 0 & 0 & 12 & 2 & 12 / 2\\
$v_2$ & 12 & 2 & 12 & 2 & 12 / 2\\
$v_3$ & 0 & 0 & 12 & 2 & 12 / 2\\
$v_4$ & 12 & 2 & 12 & 2 & 12 / 2\\
\hline
\end{tabular}
}
\end{center}
\vspace{0.2cm}

Following the algorithm operation we have that, for $k=3$, the event-trigger conditions hold for nodes $v_2$ and $v_4$ which means that they will transmit their masses to nodes $v_1$ and $v_3$ respectively. 
As a result we have, for $k=4$, that the mass variables for nodes $v_1$ and $v_3$ are $y_1[4] = y_4[3] = 12$, $z_1[4] = z_4[3] = 2$ and $y_3[4] = y_2[3] = 12$, $z_3[4] = z_2[3] = 2$ respectively. 
Then, during time step $k=4$, we have that the event-triggered conditions hold for nodes $v_1$ and $v_3$ which means that they will transmit their mass variables to nodes $v_1$ and $v_3$. 
We can easily notice that, during the execution of Algorithm~\ref{algorithm1} for $k \geq 3$, we have $\mathcal{V}^p[k] = \mathcal{V}^p[k + 2]$ (where $\mathcal{V}^p[3] = \{ v_2, v_4 \}$ and $\mathcal{V}^p[4] = \{ v_1, v_3 \}$), which means that the exchange of mass variables between the nodes will follow a \textit{periodic} behavior and the mass variables will never ``merge'' in one node (i.e., $\nexists k_0$ for which $y_j[k_0] =\sum_{l=1}^{n}{y_l[0]}  \ \ \text{and} \ \ z_j[k_0] = n$, for some node $v_j \in \mathcal{V}$, and $y_i[k_0] = 0  \ \ \text{and} \ \ z_i[k_0] = 0$, for each $ v_i \in \mathcal{V} - \{ v_j \}$).

As a result, from Table~\ref{table_det_3}, we can see that for $k \geq 3$ it holds that 
$$
q_j^s[k] = q = \frac{24 / \alpha}{4 / \alpha} ,
$$
for every $v_j \in \mathcal{V}$, for $\alpha = | \mathcal{V}^p[k] | = 2$. 
This means that, after a finite number of steps, every node $v_j$ will obtain a quantized fraction $q_j^s$ which is equal to the average $q$ of the initial values of the nodes. 
\end{example}

\begin{remark}
Note that the periodic behavior in the above graph is not only a function of the graph structure but also of the initial conditions. Also note that, in general, the priorities will also play a role because they determine the order in which nodes transmit to their out-neighbors (in the example, priorities do not come into play because each node has exactly one out-neighbor).
\end{remark}

\begin{prop}
\label{PROP1_det}
Consider a strongly connected digraph $\mathcal{G}_d = (\mathcal{V}, \mathcal{E})$ with $n=|\mathcal{V}|$ nodes and $m=|\mathcal{E}|$ edges. 
The execution of Algorithm~\ref{algorithm1} will allow each node $v_j \in \mathcal{V}$ to reach quantized average consensus after a finite number of steps, bounded by $n^5$. 
\end{prop}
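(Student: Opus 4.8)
The plan is to track the evolution of the ``mass tokens'' $(y_j[k],z_j[k])$ and argue in two stages: first that the nonzero masses \emph{equalize} (reaching the full- or partial-summation configuration described just before the statement), and then that the equalized masses \emph{propagate} the common ratio to every node's state, all within a polynomial number of steps. Throughout I would rely on three invariants that follow directly from the update rule. The total mass is conserved, $\sum_j z_j[k] = n$ and $\sum_j y_j[k] = \sum_l y_l[0]$ for every $k$, since mass is only relocated and summed, never created or destroyed. Each state weight $z^s_j[k]$ is non-decreasing in $k$, because Step~2 of Algorithm~\ref{algorithm1} can only raise it (Condition~1) or keep it fixed (Condition~2). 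Finally, since a transmitting node sends its \emph{entire} mass to a single out-neighbor, masses only ever merge; hence the maximum mass weight $\max_j z_j[k]$ is non-decreasing and the number of nonzero masses is non-increasing, a fact that mirrors Proposition~\ref{PROP1_prob}.

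Next I would single out the lexicographically maximal mass, i.e.\ the one with the largest $z_j[k]$ and, among those, the largest $y_j[k]$. The first key step is to show this mass is \emph{never} stuck: because the maximum weight is non-decreasing and masses never split, any state $(z^s_j,y^s_j)$ that could block it was written by an earlier mass of weight at most the current maximum, so either Condition~1 or Condition~2 is satisfied wherever the maximal mass currently sits. Hence it is forwarded at every iteration and performs a round-robin (rotor-router) walk on the digraph. I would then invoke the polynomial cover time of a single rotor walk on a strongly connected digraph (every node is visited within $O(mD)$, hence $O(n^3)$, steps): within one cover time the maximal mass either reaches a node still holding some other mass, whereupon Step~1 sums them and the nonzero-mass count strictly drops, or it finds that every remaining mass already equals it, i.e.\ the configuration is balanced.

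I would then assemble the bound. The nonzero-mass count starts at $n$ and strictly decreases at each merge, so at most $n-1$ merges occur; charging one cover time to each gives $O(n \cdot n^3)$ steps to reach a balanced configuration of $\alpha$ equal masses (with $\alpha = |\mathcal{V}^p[k]|$, and $\alpha=1$ the full-summation case), each carrying $(\sum_l y_l[0])/\alpha$ and $n/\alpha$. Because the final weight $n/\alpha$ equals the global maximum weight, which is non-decreasing, every node's recorded state weight is at most $n/\alpha$; so during one further cover time each circulating balanced mass updates the state of every node it visits, and by strong connectivity it visits them all, to $((\sum_l y_l[0])/\alpha,\ n/\alpha)$, yielding $q^s_j[k]=q$ for every $v_j$. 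Multiplying the polynomial factors (at most $n$ phases, each of cover-time order $n^3$, plus one final propagation phase) keeps the total comfortably below $n^5$.

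The main obstacle I expect is the middle step: rigorously ruling out a non-balanced periodic steady state in which several masses of \emph{unequal} weight circulate forever without ever colliding. The resolution hinges on a ``poisoning'' effect: once the strictly heavier maximal mass traverses a region it raises the state weights there, so any lighter mass entering that region fails its event-trigger test and becomes stuck, after which the maximal mass's space-filling rotor walk is guaranteed to return and absorb it. Making this collision-forcing argument precise, and extracting the clean cover-time constant that yields the stated $n^5$ rather than a larger polynomial, is the delicate part; the remaining steps are bookkeeping on the conserved quantities.
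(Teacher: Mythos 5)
First, an important caveat: the paper states Proposition~\ref{PROP1_det} \emph{without} any proof (it is omitted from the conference version), so your proposal can only be judged on its own merits rather than compared against the authors' argument.

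Your qualitative skeleton is sound, and your central lemma --- that the lexicographically maximal mass always satisfies one of the event-trigger conditions and therefore keeps moving --- is true; but your one-line justification of it is incomplete. Knowing that the potentially blocking state $(z^s_j,y^s_j)$ was written by an earlier mass ``of weight at most the current maximum'' does not by itself yield the inequality $y_j[k+1]\geq y^s_j[k]$ required by Condition~2. The correct argument is a case split: the mass that wrote that state either has since merged with another mass, in which case the current maximum weight strictly exceeds $z^s_j$ and Condition~1 fires, or it still exists unchanged somewhere in the network, in which case lexicographic maximality of the tracked mass gives $z^s_j < z_j[k+1]$, or $z^s_j = z_j[k+1]$ together with $y^s_j\leq y_j[k+1]$. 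The same ``the writing mass must still exist'' argument is also needed, and missing, in your final propagation phase: without it you cannot exclude a node whose state has weight $n/\alpha$ but a $y$-component larger than $(\sum_l y_l[0])/\alpha$, which would trap the circulating balanced masses instead of being overwritten by them.

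The genuine gap, however, is the quantitative crux on which the stated bound rests. You charge one single-token rotor-router cover time, $O(mD)=O(n^3)$, per merge, but the maximal mass does \emph{not} perform a single-token rotor walk: the round-robin pointers are shared by all masses, so every other active mass that triggers at a node advances that node's pointer between consecutive visits of the maximal mass. Your own ``poisoning'' observation limits this interference --- after the maximal mass first triggers at a node, strictly lighter masses arriving there are trapped and can no longer advance its pointer --- but it does not eliminate it: (i) pointers of nodes not yet visited in the current phase are perturbed adaptively by lighter masses, which is stronger than the ``arbitrary initial rotor configuration'' that standard cover-time theorems tolerate; (ii) masses \emph{tied} with the maximum (equal weight and equal $y$) keep moving forever and keep advancing pointers everywhere, so the walk is genuinely a multi-token rotor system, for which the $O(mD)$ single-token bound is not known to transfer; and (iii) each merge changes the identity of the maximal mass and resets the phase, so the per-phase analysis must be restarted. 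Poisoning plus round-robin fairness does yield \emph{eventual} coverage (any node forwarding maximal masses infinitely often forwards them along every outgoing edge, so by strong connectivity all trapped masses are eventually absorbed), i.e., finite-time convergence; but turning this into a per-phase polynomial bound --- which is precisely the content of the proposition, the $n^5$ --- is the part your proposal defers, and it is not mere bookkeeping.
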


\section{SIMULATION RESULTS} \label{results}

In this section, we present simulation results and comparisons. 
Specifically, we present simulation results of the proposed distributed algorithms for the digraph $\mathcal{G}_d = (\mathcal{V}, \mathcal{E})$ (borrowed from \cite{2014:RikosHadj}), shown in Fig.~\ref{simul}, with $\mathcal{V} = \{ v_1, v_2, v_3, v_4, v_5, v_6, v_7 \}$ and $\mathcal{E} = \{ m_{21}, m_{51}, m_{12}, m_{52}, m_{13}, m_{53}, m_{24}, m_{54}, m_{65}, m_{75},$ $m_{36}, m_{47}, m_{67} \}$, where each node has initial quantized values $y_1[0] = 5$, $y_2[0] = 4$, $y_3[0] = 8$, $y_4[0] = 3$, $y_5[0] = 5$, $y_6[0] = 2$, and $y_7[0] = 7$, respectively. 
The average $q$ of the initial values of the nodes, is equal to $q = \frac{34}{7}$. 


\begin{figure}[h]
\begin{center}
\includegraphics[width=0.40\columnwidth]{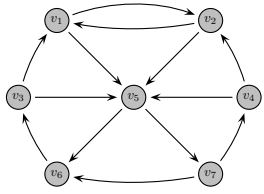}
\caption{Example of digraph for comparison of Algorithms~\ref{algorithm_prob} and \ref{algorithm1}.}
\label{simul}
\end{center}
\end{figure}


In Figure~\ref{simul_plot} we plot the state variable $q_j^s[k]$ of every node $v_j \in \mathcal{V}$ as a function of the number of iterations $k$ for the digraph shown in Fig.~\ref{simul}. 
The plot demonstrates that the proposed distributed algorithms are able to achieve a common quantized consensus value to the average of the initial states after a finite number of iterations.

\begin{figure} [ht]
\centering
\includegraphics[width=60mm]{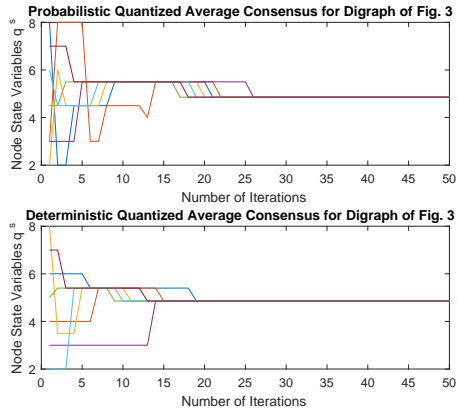}
\caption{Comparison between Algorithm~\ref{algorithm_prob} and Algorithm~\ref{algorithm1} for the digraph shown in Fig.~\ref{simul}. 
\emph{Top figure:} Node state variables plotted against the number of iterations for Algorithm~\ref{algorithm_prob}. 
\emph{Bottom figure:} Node state variables plotted against the number of iterations for Algorithm~\ref{algorithm1}\vspace{-0.45cm}.}
\label{simul_plot}
\end{figure}

\section{CONCLUSIONS}\label{future}

We have considered the quantized average consensus problem and presented one randomized and one deterministic distributed averaging algorithm in which the processing, storing and exchange of information between neighboring agents is subject to uniform quantization. 
We analyzed the operation of the proposed algorithms and established that they will reach quantized consensus after a finite number of iterations.

In the future we plan to investigate the dependence of the graph structure with full and partial mass summation of the initial values. 
Furthermore, we plan to extend the operation of the proposed algorithm to more realistic cases, such as transmission delays over the communication links and the presence of unreliable links over the communication network. 

\balance

\vspace{-0.3cm}

\bibliographystyle{IEEEtran}


\balance

\end{document}